\def\BibTeX{{\rm B\kern-.05em{\sc i\kern-.025em b}\kern-.08em
    T\kern-.1667em\lower.7ex\hbox{E}\kern-.125emX}}
\newtheorem{theorem}{Theorem}
\newtheorem{corollary}{Corollary}
\newtheorem{lemma}{Lemma}
\newtheorem{remark}{Remark}
\pgfplotsset{compat=1.16,every x tick label/.append style={font=\tiny},every y tick label/.append style={font=\tiny}}
\newcounter{savesection}
\newcounter{apdxsection}
\renewcommand\appendix{\par
	\setcounter{savesection}{\value{section}}%
	\setcounter{section}{\value{apdxsection}}%
	\setcounter{subsection}{0}%
	\gdef\thesection{\@Alph\c@section}}
\newcommand\unappendix{\par
	\setcounter{apdxsection}{\value{section}}%
	\setcounter{section}{\value{savesection}}%
	\setcounter{subsection}{0}%
	\gdef\thesection{\@arabic\c@section}}
\newcommand\blfootnote[1]{%
  \begingroup
  \renewcommand\thefootnote{}\footnote{#1}%
  \addtocounter{footnote}{-1}%
  \endgroup
}
\newcommand{\txt}[1]{\text{\normalfont #1}}
\DeclareMathOperator{\tx}{t}
\DeclareMathOperator{\rx}{r}
\DeclareMathOperator{\T}{\top}
\DeclareMathOperator{\HT}{\mathrm{H}}
\DeclareMathOperator{\F}{\mathrm{F}}
\DeclareMathOperator*{\argmin}{arg\,min}
\DeclareMathOperator*{\argmax}{arg\,max}
\DeclareMathOperator{\crb}{\txt{CRB}(\omega)}
\DeclareMathOperator{\rank}{\txt{rank}}
\title{Jointly optimal array geometries and waveforms in active sensing:\\
New insights into array design 
via the Cramér-Rao bound 
}
\begin{document}
%
\maketitle
\begin{abstract}
This paper investigates jointly optimal array geometry and waveform designs 
for 
active sensing. 
Specifically, we focus on minimizing the Cramér-Rao lower bound (CRB) of the angle of a single target 
in white Gaussian noise. 
We first find that several array-waveform pairs can yield the same CRB by virtue of
sequences with equal sums of squares, i.e., solutions to certain Diophantine equations. 
Furthermore, we show that under physical aperture and sensor number constraints, the CRB-minimizing receive array geometry is unique, whereas the transmit array can be chosen flexibly. We leverage this freedom to design a novel sparse array geometry that not only minimizes the single-target CRB given an optimal waveform, but also has a nonredundant and contiguous sum co-array---a desirable property when launching independent waveforms, with relevance also to the multi-target case.
\hypersetup{allcolors=black,pdfborder=0 0 0}
\blfootnote{This work was supported by the Netherlands Organisation for Applied Scientific Research, Netherlands Defence Academy (TNO-10026587), as well as projects Business Finland 6G-ISAC, Research Council of Finland FUN-ISAC (359094), and EU Horizon INSTINCT (101139161).}
\end{abstract}
\begin{keywords}
Active sensing, Cramér-Rao lower bound, sparse arrays, waveform design.
\end{keywords}
\vspace{-.2cm}
\section{Introduction}
\label{sec:introduction}

Multisensor active sensing systems have recently experienced an increased research interest due to emerging applications 
such as 
automotive radar, and integrated sensing and communications \cite{sun2020mimoradar,liu2023integrated}. 
An important goal of such systems is 
high spatial resolution, including unambiguous and accurate direction-of-arrival (DoA) estimation, whereas key factors influencing 
performance 
are 
the \emph{array geometry} and \emph{transmit waveforms}. 
Among the numerous optimization criteria considered in literature, 
the Cramér-Rao lower 
bound (CRB) remains a popular choice as it provides 
a fundamental limit on unbiased 
DoA estimation 
performance. 
Past works have focused on 
(sparse array \cite{amin2024sparsearrays}) geometry optimization based on the CRB and related bounds
in both passive \cite{chambers1996temporal,gershman1997anote,roy2013sparsity,pote2019reduced,kokke2023sensor} and active sensing \cite{he2010target,tohidi2019sparse,tabrikian2021cognitive},\cite[Ch.~10]{amin2024sparsearrays}, 
as well as transmit waveform optimization 
\cite{forsythe2005waveform,li2008range,bica2019radar,li2023optimal,vanderwerf2023transmit}, \cite[p.~220]{liu2023integrated}. 
However, jointly optimal array geometries and transmit waveforms that minimize the CRB have not been investigated to the best of our knowledge. 
This paper 
seeks to address this gap by focusing on the single-target CRB. 
The single-target case 
provides valuable insight with relevance also to the 
multi-target case, 
and 
applications such as
beam alignment 
\cite{chiu2019active}, target detection, and tracking 
\cite[pp.~122, 422]{liu2023integrated}.%

The contributions of the paper are as follows. 
Firstly, we show that \emph{multiple} array-waveform pairs can yield \emph{equal} CRBs, which follows from the realization that designing such array configurations corresponds to constructing integer sequences with equal sums of squares---a classical problem in number theory \cite{barnett1942SoS,bradley1998equal}. 
Secondly, we derive the receive array geometry \emph{minimizing} the CRB given a family of optimal waveforms 
corresponding to transmit beamforming, and certain physical constraints on the 
array aperture and number of sensors. 
We find a curious asymmetry between the transmitter and receiver: 
the optimal receive array is unique, whereas the transmit array can be chosen quite freely. 
We then leverage this freedom to design a novel jointly optimal sparse array geometry that also has a contiguous and nonredudant sum co-array. These properties are 
desirable for achieving high target identifability and resolution in the general multi-target case when launching independent waveforms \cite{bekkerman2006target,li2007onparameter}. 




\vspace{-.3cm}
\section{Background}

\subsection{Measurement model}
\label{sec:datamodel}

We consider a monostatic active sensing multiple-input multiple output (MIMO) system consisting of $N_\textrm{t}$ transmit (Tx) sensors collocated with $N_\textrm{r}$ receive (Rx) sensors. Thus, the angle of incidence on the Rx array equals the angle of departure of the Tx array. 
Assuming a single target located in the far field of linear Tx and Rx arrays 
at unknown angle $\omega \in [-\pi,\pi)$,  
a narrowband 
received signal model 
is given by \cite{bekkerman2006target}
%
%
\begin{equation}
\begin{split}
    \mathbf{y} 
    & =(\mathbf{S}\otimes \mathbf{I})(\mathbf{a}_\textrm{t}(\omega)\otimes \mathbf{a}_\textrm{r}(\omega) )\gamma + \mathbf{n},\label{eq:signal_model}
\end{split}
\end{equation}
where $\otimes$ denotes the Kronecker product, $\mathbf{S}\in\mathbb{C}^{T\times N_\textrm{t}}$ is a (known) spatio-temporal Tx waveform matrix, $T\geq 1$ is the waveform length in samples, $\gamma\in\mathbb{C}$ is the unknown reflection coefficient, and  
$\mathbf{n}
\in \mathbb{C}^{N_\textrm{r}T}$ denotes an additive (spatio-temporally white) noise vector whose entries 
follow an i.i.d. circularly symmetric normal distribution with $\mathbb{E}(\mathbf{n}\mathbf{n}^{\HT}) = 
\sigma^2 \mathbf{I}$.
Furthermore, $\mathbf{a}_{\textrm{t}}(\omega) = [e^{j d_\textrm{t}[1] \omega}, \dots, e^{j d_\textrm{t}[N_\textrm{t}] \omega} ]^{\T}$ and $\mathbf{a}_{\textrm{r}}(\omega) = [e^{j d_\textrm{r}[1] \omega}, \dots, e^{j d_\textrm{r}[N_\textrm{r}] \omega} ]^{\T}$ 
represent the steering vectors of the 
Tx and Rx arrays, respectively, whose sensor positions $\mathcal{D}_{\tx}=\{d_{\tx}[n]\}_{n=1}^{N_{\tx}}\subset\mathbb{Z}$ and $\mathcal{D}_{\rx}=\{d_\textrm{r}[m]\}_{m=1}^{N_{\rx}}\subset\mathbb{Z}$ are assumed to lie on a grid of integer multiples of half a carrier wavelength. 
Our goal is to estimate the target angle $\omega$, and understand how the choice of Tx/Rx arrays $\mathcal{D}_{\tx},\mathcal{D}_{\rx}$ and waveform matrix $\mathbf{S}$ impact this task.

\subsection{Single-target CRB and optimal transmit waveform}

The single-target CRB 
of angle $\omega$, assuming the reflection coefficient $\gamma$ and noise power $\sigma^2$ are unknown nuisance parameters, can be show to reduce to \cite{stoica1989music}
\begin{align}
    \txt{CRB}(\omega)
    \!=\!\frac{\sigma^2}{2|\gamma|^2}\|\mathbf{P}_{(\mathbf{S}\otimes \mathbf{I})\mathbf{a}_\textrm{tr}(\omega)}^\perp(\mathbf{S}\otimes \mathbf{I})\mathbf{\dot{a}}_\textrm{tr}(\omega)\|_2^{-2}.\label{eq:CRB}
\end{align}
Here, $\mathbf{P}_{\mathbf{X}}^\perp$ denotes the projection onto the orthogonal complement of the range space of $\mathbf{X}$; $\mathbf{a}_\textrm{tr}(\omega) \triangleq \mathbf{a}_\textrm{t}(\omega) \otimes \mathbf{a}_\textrm{r}(\omega)$ is the effective Tx-Rx steering vector; and $\mathbf{\dot{a}}_\textrm{tr}(\omega) \triangleq \frac{\partial}{\partial\omega}\mathbf{a}_\textrm{tr}(\omega)$ is its derivative with respect to $\omega$. 
%
Fortsythe and Bliss \cite{forsythe2005waveform}, as well as Li \emph{et al.} \cite{li2008range}, 
investigated 
waveforms $\mathbf{S}$ minimizing the 
CRB 
\emph{given} an array geometry. 
In the single-target case \eqref{eq:CRB}, 
the optimal waveform depends on the \emph{``spatial variances"} of the Tx and Rx arrays \cite{forsythe2005waveform}, $\chi_{\tx}\!\triangleq\!\chi(\mathcal{D}_{\tx})$ and $\chi_{\rx}\!\triangleq\!\chi(\mathcal{D}_{\rx})$, where
\begin{align}
    \chi(\mathcal{D})\triangleq\frac{1}{|\mathcal{D}|}
            \sum_{d\in\mathcal{D}}(d-\mu(\mathcal{D}))^2,
            \label{eq:sos_t_r}
\end{align}
and the corresponding spatial mean 
is $\mu(\mathcal{D})\triangleq \frac{1}{|\mathcal{D}|}\sum_{d\in\mathcal{D}}d$. In particular, if the following condition is satisfied:
\begin{equation}
 \chi_{\rx} > \chi_{\tx},
\label{eq:sb_cond}
\end{equation}
then the optimal waveform matrix has the form 
\cite{forsythe2005waveform}:
\begin{align}
    \mathbf{S}_o
        \!\triangleq\!
\underset{\mathbf{S}\in\mathbb{C}^{T\times N_{\tx}}}{\argmin} \{\textrm{CRB}(\omega)\!:\!  \chi_{\rx}\!>\!\chi_{\tx},\!\|\mathbf{S}\|_{\txt{F}}^2\!\leq\!1\}
\!=\!
\frac{\mathbf{u}\mathbf{a}_{\tx}^{\HT}(\omega)}{\sqrt{N_{\tx}}},
    \label{eq:optTxWaveform_sum}
\end{align}
where $\mathbf{u}\in\mathbb{C}^T$ is an arbitrary unit norm vector ($\|\mathbf{u}\|_2=1$), and the Tx power $\|\mathbf{S}\|_\textrm{F}^2$ is w.l.o.g. constrained to $\leq 1$. \cref{eq:optTxWaveform_sum} simply corresponds to fully coherent transmission in the target direction $\omega$, i.e., \emph{Tx beamforming}. If $\chi_{\rx}=\chi_{\tx}$, then optimal waveforms beyond \eqref{eq:optTxWaveform_sum} also exist \cite{li2008range}. Otherwise, if $\chi_{\rx}<\chi_{\tx}$, then the optimal waveform corresponds to transmitting infinitesimal energy in the target direction---see \cite{forsythe2005waveform,li2008range} for details. As this solution has limited practical relevance, we will henceforth focus on \eqref{eq:optTxWaveform_sum} which is optimal given \eqref{eq:sb_cond}. 

We conclude by highlighting two interesting facts revealed by \eqref{eq:optTxWaveform_sum}. 
Firstly, 
any optimal waveform matrix 
is column rank-deficient (when $N_{\tx}\!>\!1$). 
Hence, widely employed \emph{orthogonal waveforms do not generally minimize the CRB}---even in the case of multiple targets \cite{li2008range}. 
Secondly, any optimal waveform depends on the Tx array geometry $\mathcal{D}_{\tx}$ (and true target angle $\omega$) via steering vector $\mathbf{a}_{\tx}(\omega)$. 
That is, \emph{different Tx arrays lead to different optimal waveforms}. While this was observed in \cite{forsythe2005waveform,li2008range},
the impact of the array geometry on the CRB was not fully explored. Hence, we attempt to fill this gap by asking 
    \emph{which Tx/Rx array geometries minimize the 
    CRB in \eqref{eq:CRB} jointly with the optimal waveform in \eqref{eq:optTxWaveform_sum}?}

\section{Jointly optimal array-waveform pairs} \label{sec:main}

\subsection{Equal CRB via Rx arrays with equal sums of squares}
Substituting the optimal waveform in \eqref{eq:optTxWaveform_sum} into \eqref{eq:CRB} can be shown to simplify the single-target CRB into \cite{forsythe2005waveform}
\begin{align}
    \text{CRB}(\omega) = 
    \frac{\sigma^2}{2|\gamma|^2}\frac{1}{N_{\tx}N_{\rx}}\chi_{\rx}^{-1}.
    \label{eq:CRB_sb}
\end{align}
Hence, the CRB (given an optimal waveform) is independent of the target angle $\omega$ and only depends on the Tx array geometry via the number of Tx sensors, $N_{\tx}$. In contrast, for a fixed number of Rx sensor $N_{\rx}$, the CRB depends on the Rx array geometry via its spatial variance, $\chi_{\rx}$. 
This suggests an intriguing possibility: Rx array geometries with equal spatial variances yield equal CRBs. Designing such arrays actually corresponds to finding integer sequences with \emph{equal sums of squares}---a special class of Diophantine equations that have a long history in number theory 
\cite{barnett1942SoS,bradley1998equal}. For example, $1^2\!+\!8^2\!=\!4^2\!+\!7^2$ can be used to construct arrays $\mathcal{D}_1=\{-8,-1, 1,8\}$ and $\mathcal{D}_2=\{-7,-4,4,7\}$ 
satisfying $\chi(\mathcal{D}_1)=\chi(\mathcal{D}_2)$. While \emph{different} array geometries can achieve the \emph{same} CRB, their practical DoA estimation performance may differ, as \cref{sec:num} will demonstrate. 
Fully exploring this prospect is left for future work. Instead, we now turn our attention to deriving the Rx array configuration minimizing the CRB \eqref{eq:CRB_sb}. 

\subsection{Optimal Rx array geometry: Clustered array}\label{subsec:restrictedAperture}
 
The CRB in \eqref{eq:CRB_sb} is a monotonically decreasing function in increasing  $\chi_{\rx}$. 
Since the CRB can be made arbitrarily small simply by expanding the Rx aperture without bound, a more meaningful question is: which array geometry is optimal under a constraint on the physical array aperture?
To answer this question, we make use of the following \namecref{thm:optimal_subset}, which shows that the sparse array geometry whose sensors are clustered around the extremes of the array maximizes the spatial variance under an aperture constraint. 
For simplicity and brevity, we denote the set of nonnegative integers smaller than $N$, 
i.e., the $N$-sensor uniform linear array (ULA), by $\mathcal{U}_N\!\triangleq\!\{0,1,\ldots,N\!-\!1\}$,
and focus on the case of \emph{even} $N$.
\begin{lemma}[Clustered array] \label{thm:optimal_subset}
Let $\mathcal{L} = \mathcal{U}_{L+1}$, where $L\in\mathbb{N}_+$ is fixed. Then, given an even $N\leq L+1$, 
the subset $\mathcal{D} \subseteq \mathcal{L}$ of size $|\mathcal{D}|=N$ maximizing $\chi(\mathcal{D})$ in \eqref{eq:sos_t_r} 
is $\mathcal{D}=\mathcal{K}_N^L$, where
\begin{align}
    \mathcal{K}_N^L
    \!\triangleq\!\underset{\mathcal{D}\subseteq \mathcal{L}}{\argmin}\{\chi(\mathcal{D})\!:\!|\mathcal{D}|\!=\!N\}
    =\mathcal{U}_{N/2}\!\cup\!(L-\mathcal{U}_{N/2}).
    \label{eq:clustered}
\end{align}
Moreover, the (optimal) value of the spatial variance is
\begin{align}
    \chi(\mathcal{K}_N^L)
    =\tfrac{1}{4}((L+1-\tfrac{N}{2})^2+\tfrac{1}{3}(\tfrac{N^2}{4}-1)).
    \label{eq:clustered_obj}
\end{align}
\end{lemma}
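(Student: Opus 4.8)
The plan is to turn the aperture-constrained maximization of $\chi$ into the maximization of a \emph{convex} quadratic over a box, and then locate the optimal vertex. First, since $\chi$ in \eqref{eq:sos_t_r} is translation invariant, I would sort $\mathcal{D}=\{d_1<\cdots<d_N\}$ and view $\chi(\mathcal{D})$ as $f(d_1,\dots,d_N)$, where $f(\mathbf{x})=\tfrac1N\|\mathbf{x}\|_2^2-\tfrac1{N^2}(\mathbf{1}^{\T}\mathbf{x})^2=\tfrac1N\mathbf{x}^{\T}(\mathbf{I}-\tfrac1N\mathbf{1}\mathbf{1}^{\T})\mathbf{x}$. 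The centering matrix is positive semidefinite, so $f$ is convex. Moreover, any size-$N$ subset of $\mathcal{L}=\{0,1,\dots,L\}$, once sorted, satisfies $d_k\in[\,k-1,\;L-N+k\,]$ for every $k$ (the lower bound since $d_k\ge d_1+k-1$, the upper since $d_k\le d_N-(N-k)$). Hence $\chi(\mathcal{D})\le\max_{\mathbf{x}\in\mathcal{B}}f(\mathbf{x})$ on the box $\mathcal{B}\triangleq\prod_{k=1}^{N}[\,k-1,\;L-N+k\,]$, and this relaxation will turn out to be tight because its maximizer is an admissible integer configuration. (A purely local exchange argument on integer configurations appears insufficient: configurations such as $\{0,1,\dots,N-2\}\cup\{L\}$ cannot be improved by moving any single sensor by one grid step, yet are far from optimal when $N>2$.)

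The core step is to identify the vertex of $\mathcal{B}$ maximizing $f$. A convex function on a box attains its maximum at a vertex, i.e., at an $\mathbf{x}$ with each $x_k\in\{k-1,\,L-N+k\}$; encoding such a vertex by the index set $S$ of coordinates at the upper endpoint, we have $x_k=k-1+\delta$ for $k\in S$ and $x_k=k-1$ otherwise, with $\delta\triangleq L-N+1\ge0$. For fixed $|S|=s$ the linear part $\mathbf{1}^{\T}\mathbf{x}=\tfrac{N(N-1)}{2}+s\delta$ is constant, so it suffices to maximize $\|\mathbf{x}\|_2^2$; since replacing an index of $S$ by a larger one increases $\|\mathbf{x}\|_2^2$ by $\delta(\delta+2k-2)$ at index $k$, which is nonnegative and nondecreasing in $k$, the best $S$ of size $s$ is $\{N-s+1,\dots,N\}$. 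Substituting this choice, a direct computation collapses $f$ to $\tfrac{N^2-1}{12}+\tfrac{\delta(L+1)}{N^2}\,s(N-s)$; as $s\mapsto s(N-s)$ is concave with peak at $s=N/2$ (equivalently, incrementing $s$ changes $f$ by $\tfrac{\delta(L+1)}{N^2}(N+1-2s)$, positive for $s\le N/2$ and negative afterwards), the optimum is $s=N/2$, and the corresponding vertex is precisely the sorted position list of $\mathcal{U}_{N/2}\cup(L-\mathcal{U}_{N/2})=\mathcal{K}_N^L$. This establishes \eqref{eq:clustered} (uniquely when $N\le L$; the case $N=L+1$ is trivial since then $\mathcal{L}$ is the only size-$N$ subset), and plugging $s=N/2$ into the collapsed expression gives $\chi(\mathcal{K}_N^L)=\tfrac{N^2-1}{12}+\tfrac{(L+1)(L-N+1)}{4}$, which rearranges to \eqref{eq:clustered_obj}.

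I expect the combinatorial reduction in the second step to be the main obstacle: a priori one must compare $f$ across all $2^N$ vertices of $\mathcal{B}$, and the real work lies in showing that only the $N/2+1$ ``nested clustered'' candidates matter and that among them the balanced one wins. Everything else---convexity of $f$, the box constraints, tightness of the relaxation, and the closed form---then follows by routine algebra. As an independent check on the value, pairing the $i$-th smallest with the $i$-th largest sensor into a pair of midpoint $c_i$ and width $w_i$ yields $\chi(\mathcal{D})=\mathrm{Var}(c_1,\dots,c_{N/2})+\tfrac1{2N}\sum_{i=1}^{N/2}w_i^2$; for $\mathcal{K}_N^L$ all midpoints equal $L/2$ and $w_i=L+2-2i$, reproducing the same formula.
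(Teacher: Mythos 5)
Your proof is correct, and it is a genuinely different (and far more explicit) route than the paper's, which omits the argument entirely with the remark that \eqref{eq:clustered} ``follows directly via negation''---i.e., an exchange/contradiction argument in which a sensor not at the extremes is moved outward to increase the variance. Your approach instead writes $\chi$ as the convex quadratic $f(\mathbf{x})=\tfrac1N\mathbf{x}^{\T}(\mathbf{I}-\tfrac1N\mathbf{1}\mathbf{1}^{\T})\mathbf{x}$ of the sorted positions, relaxes the combinatorial constraint to the box $\prod_k[k-1,\,L-N+k]$, and locates the maximizing vertex by a two-stage reduction (best $S$ for fixed $|S|=s$, then best $s$); I verified the collapse $f=\tfrac{N^2-1}{12}+\tfrac{\delta(L+1)}{N^2}s(N-s)$ with $\delta=L-N+1$, and it rearranges correctly to \eqref{eq:clustered_obj}. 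What your route buys is a self-contained, global argument that simultaneously yields existence, (essentially) uniqueness for $N\le L$, and the closed-form optimal value in one computation, whereas the exchange argument requires a separate substitution to get \eqref{eq:clustered_obj} and some care to specify which exchanges are allowed. Two small caveats: (i) your parenthetical that a ``local exchange argument appears insufficient'' only rules out single-grid-step moves---the global exchange the authors presumably intend (relocating a misplaced sensor to the vacant slot farthest from the mean) does work---so it slightly overstates the case against their sketch; (ii) for strict uniqueness you implicitly need that the box maximum is attained only at the optimal vertex, which follows because two vertices can have equal $f$ along a face only if they differ by a multiple of $\mathbf{1}$ (i.e., $S=\emptyset$ versus $S=\{1,\dots,N\}$), neither of which is the optimizer; this is worth a sentence but is not a gap in the lemma as stated.
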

The proof of \eqref{eq:clustered} follows directly 
via negation 
and is 
omitted for 
brevity. 
Similarly, the value of $\chi(\mathcal{K}_N^L)$ follows by straightforward computation after 
substituting \eqref{eq:clustered} into \eqref{eq:sos_t_r}.

We can now characterize the set of array-waveform pairs jointly minimizing the single-target CRB \eqref{eq:CRB}, i.e., solutions to
    \begin{align}
\underset{\substack{\mathcal{D}_{\tx},\mathcal{D}_{\rx}\subset \mathbb{N}\\ \mathbf{S}\in\mathbb{C}^{T\times N_{\tx}}}}{\txt{minimize}}\ \crb\
\txt{s.t.}\ 
\begin{cases}
        |\mathcal{D}_{\tx}|=N_{\tx}, |\mathcal{D}_{\rx}|=N_{\rx},\\
        \|\mathbf{S}\|_{\F}^2\leq 1, \\ 
        \max \mathcal{D}_{\rx}\!-\!\min \mathcal{D}_{\rx}\leq L,\\
        \chi(\mathcal{D}_{\rx})>\chi(\mathcal{D}_{\tx}).
        \end{cases}\label{eq:joint_opt}
    \end{align}
In addition to the number of Tx/Rx sensors and Tx power, we have also constrained the Rx aperture (to $\leq L$). Moreover, the spatial variance of the Tx array should be 
smaller
than that of the Rx array to ensure that Tx beamforming \eqref{eq:optTxWaveform_sum} is optimal.

\begin{theorem}[Jointly optimal array-waveform pairs]\label{thm:optimal_aperture}
    Suppose \eqref{eq:joint_opt} is feasible for given $N_{\tx},N_{\rx}, L\in\mathbb{N}_+$, where 
    $N_{\rx}$ is even. 
    Then the solutions $(\mathcal{D}_{\tx}^\star,\mathcal{D}_{\rx}^\star,\mathbf{S}^\star)$ 
    to \eqref{eq:joint_opt} are: 
    $\mathbf{S}^\star$ 
    given by \eqref{eq:optTxWaveform_sum}, 
    \begin{align}
        \mathcal{D}_{\rx}^\star=\mathcal{K}_{N_{\rx}}^L\label{eq:opt_rx_array}
    \end{align}
    given by \eqref{eq:clustered}, 
    and any $\mathcal{D}_{\tx}^\star$ satisfying \eqref{eq:sb_cond} and $|\mathcal{D}_{\tx}^\star|=N_{\tx}$.
\end{theorem}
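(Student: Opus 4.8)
The plan is to decouple the joint optimization \eqref{eq:joint_opt} into a waveform step followed by an array step, using the key fact that the feasibility constraint $\chi(\mathcal{D}_{\rx})>\chi(\mathcal{D}_{\tx})$ coincides with the condition \eqref{eq:sb_cond} under which Tx beamforming \eqref{eq:optTxWaveform_sum} is the optimal waveform. Concretely, I would first fix an arbitrary feasible pair $(\mathcal{D}_{\tx},\mathcal{D}_{\rx})$. Since \eqref{eq:sb_cond} then holds, \eqref{eq:optTxWaveform_sum} tells us that $\txt{CRB}(\omega)$ in \eqref{eq:CRB} is minimized over $\{\mathbf{S}:\|\mathbf{S}\|_\F^2\leq1\}$ by $\mathbf{S}_o$, and substituting $\mathbf{S}_o$ collapses the CRB to the closed form \eqref{eq:CRB_sb}. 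Hence $\min_{\mathbf{S}}\txt{CRB}(\omega)=\frac{\sigma^2}{2|\gamma|^2}\frac{1}{N_{\tx}N_{\rx}}\chi_{\rx}^{-1}$, so \eqref{eq:joint_opt} becomes equivalent to minimizing \eqref{eq:CRB_sb} over feasible array pairs, with $\mathbf{S}^\star=\mathbf{S}_o$ automatically optimal at any optimal pair.

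Next I would optimize \eqref{eq:CRB_sb} over the receive array. With $N_{\tx}$ and $N_{\rx}$ fixed by the cardinality constraints, \eqref{eq:CRB_sb} is strictly decreasing in $\chi_{\rx}$ and does not involve $\mathcal{D}_{\tx}$ at all, so the task reduces to maximizing $\chi_{\rx}$ subject only to $|\mathcal{D}_{\rx}|=N_{\rx}$ and $\max\mathcal{D}_{\rx}-\min\mathcal{D}_{\rx}\leq L$. By translation invariance of $\chi$, this maximum equals $\max\{\chi(\mathcal{D}):\mathcal{D}\subseteq\mathcal{U}_{L+1},\,|\mathcal{D}|=N_{\rx}\}$ (feasibility of \eqref{eq:joint_opt} forces $N_{\rx}\leq L+1$; and since $\chi(\mathcal{K}_{N_{\rx}}^{L'})$ in \eqref{eq:clustered_obj} is strictly increasing in $L'$, the optimal aperture is exactly $L$). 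As $N_{\rx}$ is even, \cref{thm:optimal_subset} identifies the unique maximizer as $\mathcal{D}_{\rx}^\star=\mathcal{K}_{N_{\rx}}^L$ (unique up to a translation, which leaves the CRB unchanged), with optimal value \eqref{eq:clustered_obj}.

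It then remains to characterize the transmit array. Because \eqref{eq:CRB_sb} is completely independent of $\mathcal{D}_{\tx}$ beyond the fixed count $N_{\tx}$, once $\mathcal{D}_{\rx}^\star=\mathcal{K}_{N_{\rx}}^L$ and $\mathbf{S}^\star=\mathbf{S}_o$ are installed, \emph{every} $\mathcal{D}_{\tx}$ with $|\mathcal{D}_{\tx}|=N_{\tx}$ that keeps the triple feasible---i.e., that satisfies \eqref{eq:sb_cond} with this receive array, $\chi(\mathcal{D}_{\tx})<\chi(\mathcal{K}_{N_{\rx}}^L)$---attains the minimal CRB and is therefore a solution. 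At least one such $\mathcal{D}_{\tx}$ exists because \eqref{eq:joint_opt} is assumed feasible: any feasible $(\mathcal{D}_{\tx}^0,\mathcal{D}_{\rx}^0)$ satisfies $\chi(\mathcal{D}_{\tx}^0)<\chi(\mathcal{D}_{\rx}^0)\leq\chi(\mathcal{K}_{N_{\rx}}^L)$ by \cref{thm:optimal_subset}, so $\mathcal{D}_{\tx}^0$ pairs admissibly with $\mathcal{K}_{N_{\rx}}^L$.

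I expect the main obstacle to be precisely this decoupling argument: when maximizing $\chi_{\rx}$ in the second step I drop the coupling constraint $\chi_{\rx}>\chi_{\tx}$, and I must verify this relaxation does not overshoot the true optimum of \eqref{eq:joint_opt}---which is exactly what the feasibility hypothesis guarantees by ensuring a compatible $\mathcal{D}_{\tx}$ survives the choice $\mathcal{D}_{\rx}=\mathcal{K}_{N_{\rx}}^L$. A secondary, more routine technicality is bringing the ``aperture $\leq L$'' constraint into the ``subset of $\mathcal{U}_{L+1}$'' form required by \cref{thm:optimal_subset}, which I handle via translation invariance together with the monotonicity of \eqref{eq:clustered_obj} in $L$.
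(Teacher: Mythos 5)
Your proposal is correct and follows essentially the same route as the paper's proof: invoke the optimality of Tx beamforming under \eqref{eq:sb_cond} to collapse the CRB to \eqref{eq:CRB_sb}, apply \cref{thm:optimal_subset} to identify $\mathcal{K}_{N_{\rx}}^L$ as the $\chi_{\rx}$-maximizer under the aperture constraint, and observe that the resulting CRB depends on $\mathcal{D}_{\tx}$ only through $N_{\tx}$. Your treatment is in fact more careful than the paper's on one point---verifying via the feasibility hypothesis that some $\mathcal{D}_{\tx}$ remains compatible after fixing $\mathcal{D}_{\rx}=\mathcal{K}_{N_{\rx}}^L$, so that dropping the coupling constraint does not overshoot---which the paper leaves implicit.
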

\begin{proof}
Per \cite{forsythe2005waveform}, 
\eqref{eq:optTxWaveform_sum} is optimal when \eqref{eq:sb_cond} holds. Hence, the CRB \eqref{eq:CRB} simplifies to \eqref{eq:CRB_sb}. By \cref{thm:optimal_subset}, \eqref{eq:CRB_sb} is minimized by 
\eqref{eq:opt_rx_array}, 
when the Rx aperture is $\leq L$. Finally, any 
$\mathcal{D}_{\tx}$ satisfying \eqref{eq:sb_cond} and $|\mathcal{D}_{\tx}|\!=\!N_{\tx}$ has 
the same CRB, and is thus optimal.
\end{proof}
\begin{remark}\label{rem:uniqueness}
    By \cref{thm:optimal_aperture}, the clustered array in \eqref{eq:opt_rx_array} is the \underline{unique} optimal Rx array, whereas several optimal waveforms and Tx arrays exist: 
    $\mathbf{S}^\star$ 
    follows \eqref{eq:optTxWaveform_sum} and therefore depends on $\mathcal{D}_{\tx}^\star$, which can be chosen freely provided it satisfies \eqref{eq:sb_cond}.
\end{remark}
The subtle question remains for which values of tuple $(N_{\tx},N_{\rx},L)$ optimization problem \eqref{eq:joint_opt} is feasible? A simple sufficient condition is $N_{\tx}\leq N_{\rx}\leq L+1$, since then the feasible set of \eqref{eq:joint_opt} contains $\mathcal{D}_{\rx}=\mathcal{U}_{N_{\rx}}$ and $\mathcal{D}_{\tx}=\mathcal{U}_{N_{\tx}}$, which trivially satisfy \eqref{eq:sb_cond}. 
Deriving necessary and sufficient conditions 
in terms of $(N_{\tx},N_{\rx},L)$ 
is 
beyond the scope of this paper and left for future work. 
Instead, we now turn our attention to how to pick \emph{an} optimal Tx array among the possible choices. 

\subsection{How should the Tx array geometry be chosen?}\label{sec:tx}

The minimum single-target CRB---jointly optimized over the waveform and array geometry---may be achieved by multiple choices of the Tx array (cf. \cref{rem:uniqueness}). Nevertheless, some Tx array configurations might be preferable over others in terms of performance indicators beyond the CRB. Herein, we consider \emph{identifiability}, which quantifies if for a fixed number of $K$ targets, any given (noiseless) measurement can be associated with a unique set of DoAs. 
In active sensing, identifiability depends on the geometry of the sum co-array \cite{li2007onparameter}:
\begin{align*}
    \mathcal{D}_\Sigma\triangleq\mathcal{D}_{\tx}+\mathcal{D}_{\rx}=\{d_{\tx}+d_{\rx}\ |\ d_{\tx}\in\mathcal{D}_{\tx};d_{\rx}\in\mathcal{D}_{\rx}\}.
\end{align*}
Let $N_\Sigma\triangleq|\mathcal{D}_\Sigma|$ denote the number of sum co-array elements. A sufficient condition for identifying any $K\leq N_\Sigma/2$ targets
is that the sum co-array is \emph{contiguous} and the Tx waveform has full column rank, i.e., $\mathcal{D}_\Sigma\!=\!\mathcal{U}_{N_{\Sigma}}$ 
and $\rank(\bm{S})\!=\!N_{\tx}$ \cite{rajamaki2023importance}. 
Since $N_\Sigma\leq N_{\tx}N_{\rx}$, up to $ N_{\tx}N_{\rx}/2$ targets can be identified by a contiguous sum co-array with appropriately chosen (e.g., orthogonal) waveforms.\footnote{Fully leveraging the sum co-array would hence in general require transmitting a suboptimal waveform, i.e., one that does not minimize the CRB.} 
An array achieving $N_\Sigma\!=\!N_{\tx}N_{\rx}$ is called \emph{nonredundant}. 
The following \namecref{THM:CONTIGUOUS} establishes a set of $(N_{\tx},N_{\rx},L)$ tuples solving \eqref{eq:joint_opt} and yielding a contiguous and nonredundant sum co-array. For a proof, see Appendix~\labelcref{a:proof}.
\begin{corollary}[Optimal contiguous nonredundant co-array]\label{THM:CONTIGUOUS}
    Given $N_{\tx},N_{\rx}\in\mathbb{N}_+$, where $N_{\rx}$ is even, if $L=(N_{\tx}+1)N_{\rx}/2-1$, then the following Tx array is a solution to \eqref{eq:joint_opt}:  
    \begin{align}
        \mathcal{D}_{\tx}^\star=
        \tfrac{N_{\rx}}{2}\mathcal{U}_{N_{\tx}}.
        \label{eq:opt_tx_array}
    \end{align}
    The sum co-array of \eqref{eq:opt_tx_array} and the optimal Rx array 
    \eqref{eq:opt_rx_array} 
    is contiguous ($\mathcal{D}_\Sigma=\mathcal{U}_{N_\Sigma}$) and nonredundant ($N_\Sigma=N_{\tx}N_{\rx}$).
\end{corollary}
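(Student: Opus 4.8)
The plan is to prove \cref{THM:CONTIGUOUS} in two stages. First, I would show that the triple $(\mathcal{D}_{\tx}^\star,\mathcal{D}_{\rx}^\star,\mathbf{S}^\star)$ with $\mathcal{D}_{\tx}^\star$ as in \eqref{eq:opt_tx_array}, $\mathcal{D}_{\rx}^\star=\mathcal{K}_{N_{\rx}}^L$ as in \eqref{eq:opt_rx_array}, and $\mathbf{S}^\star$ as in \eqref{eq:optTxWaveform_sum} is \emph{feasible} for \eqref{eq:joint_opt}; \cref{thm:optimal_aperture} then immediately upgrades this to \emph{optimality}, since that theorem certifies \emph{any} Tx array with $|\mathcal{D}_{\tx}^\star|=N_{\tx}$ obeying the spatial-variance inequality \eqref{eq:sb_cond} against the optimal Rx array as a solution. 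Second, I would compute the sum co-array $\mathcal{D}_\Sigma=\mathcal{D}_{\tx}^\star+\mathcal{D}_{\rx}^\star$ explicitly and read off contiguity and nonredundancy. The identity doing most of the work is $L+1-\tfrac{N_{\rx}}{2}=\tfrac{N_{\tx}N_{\rx}}{2}$, obtained by substituting $L=(N_{\tx}+1)N_{\rx}/2-1$.

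For feasibility, the cardinality constraints are immediate: dilating $\mathcal{U}_{N_{\tx}}$ by the positive integer $N_{\rx}/2$ is injective, so $|\mathcal{D}_{\tx}^\star|=N_{\tx}$, and $\mathcal{K}_{N_{\rx}}^L$ has $N_{\rx}$ elements whenever its two clusters $\mathcal{U}_{N_{\rx}/2}$ and $L-\mathcal{U}_{N_{\rx}/2}$ are disjoint, which holds because $L\geq N_{\rx}-1$ for every $N_{\tx}\geq1$. Also $\min\mathcal{D}_{\rx}^\star=0$ and $\max\mathcal{D}_{\rx}^\star=L$, so the aperture constraint holds with equality. The only constraint needing computation is $\chi(\mathcal{D}_{\rx}^\star)>\chi(\mathcal{D}_{\tx}^\star)$: substituting the above identity into \eqref{eq:clustered_obj} gives $\chi(\mathcal{D}_{\rx}^\star)=\tfrac14\big((\tfrac{N_{\tx}N_{\rx}}{2})^2+\tfrac13(\tfrac{N_{\rx}^2}{4}-1)\big)$, while the elementary variance of a dilated ULA, $\chi(c\,\mathcal{U}_M)=c^2(M^2-1)/12$ (from \eqref{eq:sos_t_r}), gives $\chi(\mathcal{D}_{\tx}^\star)=\tfrac{N_{\rx}^2}{4}\cdot\tfrac{N_{\tx}^2-1}{12}$. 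Their difference simplifies to $\tfrac{1}{24}N_{\rx}^2(N_{\tx}^2+1)-\tfrac{1}{12}$, which is strictly positive since $N_{\rx}\geq2$ and $N_{\tx}\geq1$ force $N_{\rx}^2(N_{\tx}^2+1)\geq8>2$. Thus the triple is feasible, and \cref{thm:optimal_aperture} makes it a solution to \eqref{eq:joint_opt}.

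For the co-array, the same identity lets me write $\mathcal{D}_{\rx}^\star=\mathcal{U}_{N_{\rx}/2}\cup\big(\tfrac{N_{\tx}N_{\rx}}{2}+\mathcal{U}_{N_{\rx}/2}\big)$, so distributing the Minkowski sum over this union yields $\mathcal{D}_\Sigma=\big(\tfrac{N_{\rx}}{2}\mathcal{U}_{N_{\tx}}+\mathcal{U}_{N_{\rx}/2}\big)\cup\big(\tfrac{N_{\tx}N_{\rx}}{2}+\tfrac{N_{\rx}}{2}\mathcal{U}_{N_{\tx}}+\mathcal{U}_{N_{\rx}/2}\big)$. The sub-claim to verify is the ``nested-array'' tiling identity $\tfrac{N_{\rx}}{2}\mathcal{U}_{N_{\tx}}+\mathcal{U}_{N_{\rx}/2}=\mathcal{U}_{N_{\tx}N_{\rx}/2}$: the coarse ULA of step $N_{\rx}/2$ translates the fine block $\mathcal{U}_{N_{\rx}/2}$ (which has length exactly $N_{\rx}/2$) to $N_{\tx}$ consecutive, non-overlapping, gap-free positions, tiling $\{0,1,\dots,N_{\tx}N_{\rx}/2-1\}$. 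The second term in $\mathcal{D}_\Sigma$ is then precisely $\mathcal{U}_{N_{\tx}N_{\rx}/2}$ shifted by its own length $N_{\tx}N_{\rx}/2$, so the two blocks abut and $\mathcal{D}_\Sigma=\mathcal{U}_{N_{\tx}N_{\rx}}$. Hence $\mathcal{D}_\Sigma$ is contiguous, and $N_\Sigma=N_{\tx}N_{\rx}$ meets the bound $N_\Sigma\leq N_{\tx}N_{\rx}$ with equality, i.e., the array is nonredundant.

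The main obstacle, such as it is, lies in making the two Minkowski-sum tilings airtight --- confirming the blocks abut exactly rather than overlapping or leaving a gap, which is where the precise value of $L$ matters. Everything else is routine: the spatial-variance inequality is the only real computation, the well-definedness of $\mathcal{K}_{N_{\rx}}^L$ reduces to the easy bound $L\geq N_{\rx}-1$, and the degenerate case $N_{\tx}=1$ (where $\mathcal{D}_{\tx}^\star=\{0\}$, $\mathcal{D}_{\rx}^\star=\mathcal{U}_{N_{\rx}}$, and $\mathcal{D}_\Sigma=\mathcal{U}_{N_{\rx}}$) is consistent with the general formulas.
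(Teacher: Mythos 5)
Your proof is correct and follows essentially the same route as the paper's: verify $\chi(\mathcal{D}_{\rx}^\star)>\chi(\mathcal{D}_{\tx}^\star)$ via the explicit variance formulas (your difference $\tfrac{1}{24}N_{\rx}^2(N_{\tx}^2+1)-\tfrac{1}{12}>0$ is the paper's $2N_{\rx}^2(N_{\tx}^2+1)-4>0$ up to scaling), invoke \cref{thm:optimal_aperture} for optimality, and establish contiguity by distributing the Minkowski sum over the two clusters and tiling $\mathcal{U}_{N_{\tx}N_{\rx}/2}$ with its shift by $N_{\tx}N_{\rx}/2$. The only cosmetic difference is that you rewrite $L-\mathcal{U}_{N_{\rx}/2}$ as a forward shift before summing, whereas the paper applies the flip $-\mathcal{U}_M=\mathcal{U}_M-M+1$ after; you also spell out a few feasibility details the paper leaves implicit.
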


To the best of our knowledge, the sparse Tx-Rx array geometry in 
\labelcref{eq:opt_rx_array,eq:opt_tx_array} has not appeared in the literature before. Neither has its optimality w.r.t. minimizing the single-target CRB been established, nor the fact that it
can achieve 
a contiguous nonredundant sum co-array. 
We note that the clustered array in \eqref{eq:clustered} has empirically \cite{gershman1997anote} been found to minimize the so-called \emph{unconditional} 
CRB corresponding to a different (single-target) measurement model 
typically arising in passive sensing. This is nevertheless different from the active sensing model \eqref{eq:signal_model} and \emph{conditional} 
CRB considered herein, and hence does not imply our results. 
Moreover, in stark contrast to passive sensing, in active sensing one has the freedom to choose both the Tx array geometry (as in \cref{sec:tx}) and the transmitted waveforms, despite the optimal Rx array being a clustered array in both cases. 
Exploring generalizations of the array geometry in 
\labelcref{eq:opt_rx_array,eq:opt_tx_array} 
is left for future work. Such generalizations, possibly with a redundant or even noncontiguous sum co-array, may be of interest in minimizing the CRB given an 
$\mathbf{S}$ 
differing from the (optimal) choice in \eqref{eq:optTxWaveform_sum}.

\vspace{-0.3cm}
\section{Numerical examples}\label{sec:num}
\noindent Next, we illustrate the results of \cref{sec:main} numerically. We focus on the four array geometries depicted in \cref{fig:arrays}, where \subref{fig:clust} shows the optimal array defined by \labelcref{eq:opt_rx_array,eq:opt_tx_array}.
We consider the maximum-likelihood estimator (MLE) of $\omega$ for a \emph{fixed} waveform matrix $\mathbf{S}$. Given \eqref{eq:signal_model}, the MLE can be shown to reduce to the following joint Tx-Rx beamformer \cite{evans1982application}: 
$\hat{\omega}=\argmax_{\bar{\omega}\in[-\pi,\pi)}|\mathbf{y}^{\HT}(\mathbf{S}\mathbf{a}_{\tx}(\bar{\omega})\otimes \mathbf{a}_{\rx}(\bar{\omega}))|^2/\|\mathbf{S}\mathbf{a}_{\tx}(\bar{\omega})\|_2^2$. 
We assume the optimal waveform in \eqref{eq:optTxWaveform_sum} is used, 
with
$\mathbf{u}=\tfrac{1}{\sqrt{T}}\mathbf{1}_T$
and $T=N_{\tx}$. 
Since $\mathbf{S}$ is a function of the true angle, an initial estimate of $\omega$ would be needed in practice; 
see \cite{li2008range} for a discussion and examples. 
The ground truth target angle and reflectivity are set to $\omega=0$ and $\gamma=1$, respectively, whereas entries of noise vector $\mathbf{n}$ are drawn from an i.i.d. circularly symmetric normal distribution with variance $\sigma^2$. The squared error of the MLE is averaged over $10^4$ Monte Carlo trials.
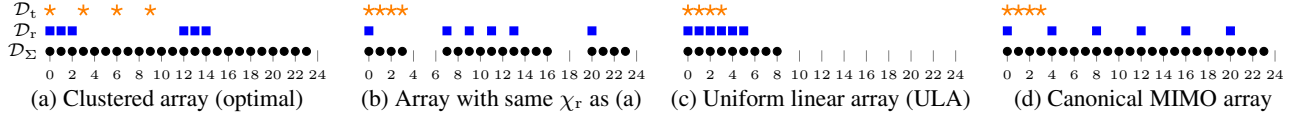
\begin{figure*}
    \newcommand{\dataLoc}{Data/ClusteredSpecialCase}
    \newcommand{\xmax}{24}
    \newcommand{\xmin}{0}
    \newcommand{\fwidth}{5.2}
    \newcommand{\msize}{1.5}
	\pgfplotsset{every x tick label/.append style={font=\tiny},every y tick label/.append style={font=\scriptsize}}
	\centering
	\subfloat[Clustered array (optimal)]{   
		\begin{tikzpicture}
			\begin{axis}[width= \fwidth cm ,height=2.4 cm,xmin=\xmin-0.2,xmax=\xmax+0.2,xtick={\xmin,\xmin+2,...,\xmax},ytick={-1,0,1},yticklabels={$\mathcal{D}_\Sigma$,$\mathcal{D}_{\rx}$,$\mathcal{D}_{\tx}$},ytick style={draw=none},ymin=-1.5,ymax=1.5,xticklabel shift = 0 pt,xtick pos=bottom,axis line style={draw=none}]
				\addplot[orange,only marks,line width = 0.7,mark=star,mark size=2.2,y filter/.code={\pgfmathparse{\pgfmathresult*0+1}}] table[x=D,y=D]{\dataLoc/Dt.dat};
				\addplot[blue,only marks,mark=square*,mark size=\msize,y filter/.code={\pgfmathparse{\pgfmathresult*0}}] table[x=D,y=D]{\dataLoc/Dr.dat};
				\addplot[only marks,mark=*,mark size=\msize,y filter/.code={\pgfmathparse{\pgfmathresult*0-1}},x filter/.code={\pgfmathparse{\pgfmathresult}}] table[x=D,y=D]{\dataLoc/D_Sigma.dat};
			\end{axis}
		\end{tikzpicture}\label{fig:clust}
  \vspace{-0.25cm}
	}%
 \renewcommand{\dataLoc}{Data/ArrayF}
	\centering
	\subfloat[Array with same $\chi_{\rx}$ as \subref{fig:clust}]{    
		\begin{tikzpicture}
			\begin{axis}[width= \fwidth cm ,height=2.4 cm,xmin=\xmin-0.2,xmax=\xmax+0.2,xtick={\xmin,\xmin+2,...,\xmax},ytick={-1,0,1},
   yticklabels={},
   ytick style={draw=none},ymin=-1.5,ymax=1.5,xticklabel shift = 0 pt,xtick pos=bottom,axis line style={draw=none}]
				\addplot[orange,only marks,line width = 0.7,mark=star,mark size=2.2,y filter/.code={\pgfmathparse{\pgfmathresult*0+1}}] table[x=D,y=D]{\dataLoc/Dt.dat};
				\addplot[blue,only marks,mark=square*,mark size=\msize,y filter/.code={\pgfmathparse{\pgfmathresult*0}}] table[x=D,y=D]{\dataLoc/Dr.dat};
				\addplot[only marks,mark=*,mark size=\msize,y filter/.code={\pgfmathparse{\pgfmathresult*0-1}},x filter/.code={\pgfmathparse{\pgfmathresult}}] table[x=D,y=D]{\dataLoc/D_Sigma.dat};
			\end{axis}
		\end{tikzpicture}\label{fig:alt}
  \vspace{-0.25cm}
	}%
    \renewcommand{\dataLoc}{Data/ULA}
	\centering
	\subfloat[Uniform linear array (ULA)]{    
		\begin{tikzpicture}
			\begin{axis}[width= \fwidth cm ,height=2.4 cm,xmin=\xmin-0.2,xmax=\xmax+0.2,xtick={\xmin,\xmin+2,...,\xmax},ytick={-1,0,1},
   yticklabels={},
   ytick style={draw=none},ymin=-1.5,ymax=1.5,xticklabel shift = 0 pt,xtick pos=bottom,axis line style={draw=none}]
				\addplot[orange,only marks,line width = 0.7,mark=star,mark size=2.2,y filter/.code={\pgfmathparse{\pgfmathresult*0+1}}] table[x=D,y=D]{\dataLoc/Dt.dat};
				\addplot[blue,only marks,mark=square*,mark size=\msize,y filter/.code={\pgfmathparse{\pgfmathresult*0}}] table[x=D,y=D]{\dataLoc/Dr.dat};
				\addplot[only marks,mark=*,mark size=\msize,y filter/.code={\pgfmathparse{\pgfmathresult*0-1}},x filter/.code={\pgfmathparse{\pgfmathresult}}] table[x=D,y=D]{\dataLoc/D_Sigma.dat};
			\end{axis}
		\end{tikzpicture}\label{fig:ula}
  \vspace{-0.25cm}
	}%
 \renewcommand{\dataLoc}{Data/Nested}
	\centering
	\subfloat[Canonical MIMO array]{
		\begin{tikzpicture}
			\begin{axis}[width= \fwidth cm ,height=2.4 cm,xmin=\xmin-0.2,xmax=\xmax+0.2,xtick={\xmin,\xmin+2,...,\xmax},ytick={-1,0,1},
   yticklabels={},
   ytick style={draw=none},ymin=-1.5,ymax=1.5,xticklabel shift = 0 pt,xtick pos=bottom,axis line style={draw=none}]
				\addplot[orange,only marks,line width = 0.7,mark=star,mark size=2.2,y filter/.code={\pgfmathparse{\pgfmathresult*0+1}}] table[x=D,y=D]{\dataLoc/Dt.dat};
				\addplot[blue,only marks,mark=square*,mark size=\msize,y filter/.code={\pgfmathparse{\pgfmathresult*0}}] table[x=D,y=D]{\dataLoc/Dr.dat};
				\addplot[only marks,mark=*,mark size=\msize,y filter/.code={\pgfmathparse{\pgfmathresult*0-1}},x filter/.code={\pgfmathparse{\pgfmathresult}}] table[x=D,y=D]{\dataLoc/D_Sigma.dat};
			\end{axis}
		\end{tikzpicture}\label{fig:nest}
  \vspace{-0.25cm}
	}%
 \vspace{-0.25cm}
\caption{Array geometries with $N_{\tx}\!=\!4$ transmitter and $N_{\rx}\!=\!6$ receiver sensors. 
Array \subref{fig:clust} 
maximizes the Rx spatial variance $\chi_{\rx}$
given 
physical aperture $L\!=\!14$, and has a contiguous nonredudant sum co-array. 
The Rx array in \subref{fig:alt} has equal $\chi_{\rx}$, but larger $L$.
}\label{fig:arrays}
\vspace{-.5cm}
\end{figure*}

\cref{fig:mse} shows the (single-target) CRBs and empirical MLE performance of the array configurations in \cref{fig:arrays} as a function of $\txt{SNR}\triangleq 10\log(|\gamma|^2/\sigma^2$). 
The optimal array in \cref{fig:clust} minimizes the CRB among all geometries with aperture $L\!=\!14$. 
By using a larger aperture, one can construct array configurations achieving \emph{equal} or larger CRB, as the array in \cref{fig:alt} with $L=20$ shows. Indeed, the spatial variances of the Rx arrays in \subref{fig:clust} and \subref{fig:alt} are identical by virtue of the following 
equal sums of squares: 
$5^2+6^2+7^2=1^2+3^2+10^2$. 
Although the CRBs of arrays \subref{fig:clust} and \subref{fig:alt} are identical, their MLE performance in the threshold region differs. This is related to the difference in the beampatterns of the two arrays.
Understanding exactly how the array geometry affects MLE, especially for arrays with equal spatial variance in \eqref{eq:sos_t_r}, is an open question. 
Finally, we contrast the clustered array in \cref{fig:arrays}~\subref{fig:clust} to the well-known  \subref{fig:ula} ULA and \subref{fig:nest} canonical MIMO (radar) array with a nested structure. 
The ULA has a significantly higher CRB than the other arrays due to its smaller spatial variance. In contrast, the MIMO 
array, due to its larger Rx aperture ($L=20$), has a slightly lower CRB than the clustered array in \subref{fig:clust}. 
However, the clustered array requires \emph{only half the physical aperture} to realize the same 
sum co-array. 
This is advantageous when high identifiability and angular resolution 
yet small physical array size is desirable, 
as in automotive radar \cite{sun2020mimoradar}. 
%
The MLE of the MIMO array suffers from poor performance due to spatial aliasing,  
as its Rx array (a dilated ULA) lacks consecutive elements, unlike the Clustered array. 
Although the issue can be alleviated by restricting the search space of the MLE to the vicinity of $\omega$ based on the region illuminated upon Tx, 
it also illustrates that upon coherent transmission, 
angle estimation performance heavily depends on the Rx array geometry, as opposed to the 
sum co-array which is key when transmitting independent 
waveforms \cite{li2007onparameter}. Judicious array design is thus needed to ensure robust performance across various transmission strategies.
\vspace{-0.3cm}
\section{Conclusion}
\label{sec:conclusion}

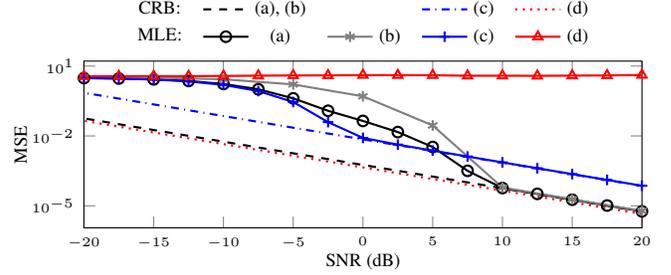
\begin{figure}
    	\centering
    	\begin{tikzpicture}
    		\begin{axis}[width=9 cm,height= 3.8 cm,ylabel={\scriptsize MSE},xlabel= {\scriptsize SNR (dB)},
      xmin=-20,xmax=20,
      ymode=log,
      xlabel shift = {-5 pt},
      ylabel shift = {-5 pt},
      legend style = {at={(0.5,1.03)},anchor=south,draw=none,fill=none},legend columns=5,legend style={/tikz/every even column/.append style={column sep=0.2cm,font={\scriptsize}}}]
            \addlegendimage{empty legend}
         \addlegendentry{CRB:}
           \addplot[black,dashed,thick,draw] table[x=SNR,y=CRB]{Data/MSE_CRB_clustered_optimal.dat};
           \addlegendentry{\subref{fig:clust}, \subref{fig:alt}}
           \addlegendimage{empty legend}
         \addlegendentry{}
           \addplot[blue,dashdotted,thick,draw] table[x=SNR,y=CRB]{Data/MSE_CRB_ULA_optimal.dat};
           \addlegendentry{\subref{fig:ula}}
            \addplot[red,dotted,thick,draw] table[x=SNR,y=CRB]{Data/MSE_CRB_nested_optimal.dat};
           \addlegendentry{\subref{fig:nest}}
            \addlegendimage{empty legend}
         \addlegendentry{MLE:}
          \addplot[black,thick,draw,mark=o] table[x=SNR,y=MSE]{Data/MSE_CRB_clustered_optimal.dat};
           \addlegendentry{\subref{fig:clust}}
         \addplot[gray,thick,draw,mark=asterisk,mark options={solid}] table[x=SNR,y=MSE]{Data/MSE_CRB_array_G_optimal.dat};
           \addlegendentry{\subref{fig:alt}}
            \addplot[blue,thick,draw,mark=+,mark options={solid}] table[x=SNR,y=MSE]{Data/MSE_CRB_ULA_optimal.dat};
           \addlegendentry{\subref{fig:ula}}
           \addplot[red,thick,draw,mark=triangle,mark options={solid}] table[x=SNR,y=MSE]{Data/MSE_CRB_nested_optimal.dat};
           \addlegendentry{\subref{fig:nest}}
    		\end{axis}%
    	\end{tikzpicture}\vspace{-0.3cm}
     \caption{Single-target CRB and MLE performance of array geometries in \cref{fig:arrays} using an optimal 
     waveform following \eqref{eq:optTxWaveform_sum}. 
     }\label{fig:mse}
     \vspace{-.3cm}
\end{figure}

This paper investigated waveform-array geometry pairs minimizing the single-target CRB in active sensing. 
Focusing on a family of optimal waveforms \cite{forsythe2005waveform,li2008range} corresponding to Tx beamforming in the target direction, 
we showed that the optimal linear Rx array places sensors at the edges of its aperture 
to maximize its spatial variance---the sums of squares of its centered sensor positions. The Tx array geometry can be chosen freely, provided its spatial variance does not exceed that of the Rx array. We established that 
the Tx array can be selected such that the sum co-array of the joint Tx-Rx array 
is both contiguous and nonredundant. 
The derived array geometry therefore has optimal properties both (in the single and multi-target cases) when launching coherent and independent waveforms, in contrast to the ULA, which has substantially higher CRB and lower identifiability, and the (nonredundant nested) MIMO array, which requires a larger physical aperture to achieve a comparable CRB and sum co-array. 


\vspace{-.3cm}
\appendix
\section{Proof of Corollary~\ref{THM:CONTIGUOUS}}\label{a:proof}
    We first show that if $L=(N_{\tx}+1)N_{\rx}/2-1$ 
    then \eqref{eq:opt_tx_array} is a solution of \eqref{eq:joint_opt}. This reduces to showing that \eqref{eq:opt_tx_array} satisfies \eqref{eq:sb_cond} given $\mathcal{D}_{\rx}^\star=\mathcal{K}_{N_{\rx}}^L$, which \cref{thm:optimal_aperture} established was the optimal Rx array configuration. By \eqref{eq:sos_t_r}, we have
    \begin{align}
        \chi(\mathcal{D}_{\tx}^\star)
        \!=\!
        \chi(\tfrac{N_{\rx}}{2}\mathcal{U}_{N_{\tx}})
        \!=\!
        \tfrac{N_{\rx}^2}{4}\chi(\mathcal{U}_{N_{\tx}})\!=\!
        \tfrac{1}{48}(N_{\tx}^2-1)N_{\rx}^2.\label{eq:sos_opt_tx}
    \end{align}
    Since $\mathcal{D}_{\rx}^\star=\mathcal{K}_{N_{\rx}}^L$, where $L=(N_{\tx}+1)N_{\rx}/2-1$, condition $\chi(\mathcal{D}_{\rx}^\star)> \chi(\mathcal{D}_{\tx}^\star)$ can be rewritten using \labelcref{eq:sos_opt_tx,eq:clustered_obj} as
    \begin{align*}
        \tfrac{1}{4}((\tfrac{N_{\tx}N_{\rx}}{2})^2+\tfrac{1}{3}(\tfrac{N_{\rx}^2}{4}-1))\!>\! 
        \tfrac{1}{48}(N_{\tx}^2-1)N_{\rx}^2.
    \end{align*}
    Rearranging terms yields $2N_{\rx}^2(N_{\tx}^2+1)-4> 0$, 
    which holds for any even $N_{\rx}\geq 2$. Hence, \eqref{eq:sb_cond} is satisfied, which implies that the feasible set of \eqref{eq:joint_opt} is nonempty, and per \cref{thm:optimal_aperture}, that \eqref{eq:opt_tx_array} is an optimal solution to \eqref{eq:joint_opt}.

    We now show that the sum co-array 
    is contiguous and nonredundant. Let $\alpha\!=\!N_{\rx}/2$ and $\beta\!=\!N_{\tx}\alpha\!=\!N_{\tx}N_{\rx}/2$. Then
    \begin{align*}
    \mathcal{D}_\Sigma&=\alpha \mathcal{U}_{N_{\tx}}+(\mathcal{U}_{\alpha}\cup(L-\mathcal{U}_{\alpha}))
    \\
    &=(\alpha \mathcal{U}_{N_{\tx}}+\mathcal{U}_{\alpha})\cup(\alpha \mathcal{U}_{N_{\tx}}-\mathcal{U}_{\alpha}+L),
    \end{align*}
    where $\alpha \mathcal{U}_{N_{\tx}}+\mathcal{U}_{\alpha} = \{\alpha m+n\ |\ m\in \mathcal{U}_{N_{\tx}}; n\in\mathcal{U}_{\alpha}\}=\mathcal{U}_{\beta}$. 
    Furthermore, note that $-\mathcal{U}_M=\mathcal{U}_M-M+1$. Hence, 
\begin{align*}
    \alpha \mathcal{U}_{N_{\tx}}- \mathcal{U}_{\alpha}+L=\alpha \mathcal{U}_{N_{\tx}}+ \mathcal{U}_{\alpha}+L-\alpha+1.
\end{align*}
 Recalling that $L=\beta+\alpha-1$ then yields the desired result
    \begin{align*}
        \mathcal{D}_\Sigma
        =\mathcal{U}_{\beta}\cup(\mathcal{U}_{\beta}+\beta)=\mathcal{U}_{2\beta}=\mathcal{U}_{N_{\tx}N_{\rx}}.\qquad \square
    \end{align*}

\vfill\pagebreak
\clearpage

\unappendix	

\bibliographystyle{IEEEbib}

\bibliography{referencesAbbrvd} 


\end{document}